\newtheorem{theorem}{Theorem}[section]
\newtheorem{corollary}{Corollary}[theorem]
\newtheorem{lemma}[theorem]{Lemma}
\title{Constant Power Root Market Makers}
\author{Mike Wu, Will McTighe \\ \small\texttt{\{mike, will\}@paretolabs.xyz}}
\date{February 2022}
\begin{document}

\maketitle

\begin{abstract}
The paper introduces a new type of constant function market maker, the \textit{constant power root market marker}. We show that the constant sum (used by mStable), constant product (used by Uniswap and Balancer), constant reserve (HOLD-ing), and constant harmonic mean trading functions are special cases of the constant power root trading function. We derive the value function for liquidity providers, marginal price function, price impact function, impermanent loss function, and greeks for constant power root market markers. In particular, we find that as the power $q$ varies from the range of $-1$ to $1$, the power root function interpolates between the harmonic ($q=-1$), geometric ($q=0$), and arithmetic ($q=1$) means. This provides a toggle that trades off between price slippage for traders and impermanent loss for liquidity providers. As the power $q$ approaches 1, slippage is low and impermanent loss is high. As $q$ approaches to -1, price slippage increases and impermanent loss decreases. 
% We then propose an extension: the \textit{dynamic power root market maker}, which automatically adjusts the power in response to market activity, and show simulation results.
% We perform simulations with real world data to test theoretical claims made.
\end{abstract}

\section{Preface}

This paper is part of an exploration of alternative approaches to AMM curvature to incentivize trading and liquidity provision in specific use cases. The goal of an efficient market is to provide the most liquidity at the lowest price impact. In service of that goal, we had two initial hypotheses. First, there is low liquidity for long tail assets in existing automated market makers (AMMs) because it is unattractive for liquidity providers, absent liquidity incentives. We hypothesized that greater curvature than the constant product function could incentivize additional liquidity provision. Second, we reasoned that there could be an alternative AMM for stableswaps, providing more protection to LPs in edge cases, like the recent events of Terra. Flexing power could be a gas efficient approach to adjusting curvature but introduces issues such as reserve depletion and round-trip arbitrage. We explored these in follow up research. If you'd like to discuss these topics in more detail, please reach out.

\section{Constant Function Market Makers}

Constant function market makers (CFMMs) are a family of AMMs commonly used to parameterize asset exchange mechanisms on public blockchains. Critically, CFMMs rely on liquidity providers (LPs) who provide tokens to a supply of reserves for use by a smart contract. The contract contains code that executes swaps for traders. To preserve market stability, swaps must ensure a function of the reserves is constant. This function is known as the trading function, also called the scoring rule or the invariant.

For simplicity, we will assume the case of two tokens, although it should be straightforward to generalize to more than two. Let $\psi: \mathbf{R} \times \mathbf{R} \rightarrow \mathbf{R}$ be an invariant, and $x, y \in \mathbf{R}_+$ be reserves. A reserve represents the amount of a token available for swaps in the contract pool. Let $x', y' \in \mathbf{R}_+$ denote new values for the reserves after a trade. A trade is allowed if and only if $\psi(x', y') = k$ for some constant value $k \in \mathbf{R}_+$, hence the name \textit{constant function}.

Three examples of trading rules are, borrowing the notation from \cite{angeris2021replicating}:

\textbf{Constant reserve } We begin with a trivial invariant $\psi_{\text{rsv}}(x, y) = \left\{\begin{matrix}
0 & \textup{if } x = k_1 \textup{ and } y = k_2 \\
-\infty & \text{otherwise} \\
\end{matrix}\right.$ for $k_1, k_2 \in \mathbf{R}$, which only allows trading if it keeps the reserves at fixed values $a, b \in \mathbf{R}^+$. When $\psi$ evaluates to $-\infty$, trading is not permitted. This is also known as HODL-ing.

\textbf{Constant sum } Define $\psi_{\text{sum}}(x, y) = \left\{\begin{matrix}
0 & \textup{if } x + y = k \\
-\infty & \text{otherwise} \\
\end{matrix}\right.$, also called the ``arithmetic mean''. This function is only usable for asset pairs where the price ratio is fixed (e.g. USDT-USDC or wBTC-renBTC). Technically, it has no slippage (since price is fixed) and no impermanent loss (since we can always freely swap one $x$ for one $y$), but no price adjustments as reserves fluctuate, meaning a reserve can be depleted entirely. 
% However, the more intuitive interpretation is that as a trading function tends to $(x + y)$, becoming flatter, slippage is low and impermanent loss is very high. 
For a real world example, mStable is a DEX for stablecoins that uses a constant sum trading function.

\textbf{Constant product } Define $\psi_{\text{prod}}(x, y) = \left\{\begin{matrix}
0 & \textup{if } xy = k \\
-\infty & \text{otherwise} \\
\end{matrix}\right.$, also called a ``geometric mean''. We can see that $y$ must grow exponentially large as $x$ nears zero, discouraging any traders from fully depleting reserves. (Similarly, the property holds if $y$ is the token nearing zero). The unfortunate consequence of this design is susceptibility to slippage. Uniswap \cite{angeris2019analysis,adams2021uniswap} and Balancer \cite{martinelli2019non,evans2020liquidity} are instances of a constant product. 

\textbf{Curve \cite{egorov2019stableswap} } Define $\psi_{\text{curve}}(x, y) = \left\{\begin{matrix}
0 & \textup{if } \alpha(x+y) + \beta(xy) = k \\
-\infty & \text{otherwise} \\
\end{matrix}\right.$, an interpolation between the constant sum and product functions. As $\left(\frac{\alpha}{\beta}\right)$ increases, this converges to the constant sum. As $\left(\frac{\alpha}{\beta}\right)$ decreases, this converges to the constant product. In this paper, we explore another method to toggle between arithmetic and geometric means.

\textbf{Constant harmonic mean } Define $\psi_{\text{har}}(x, y) = \left\{\begin{matrix}
0 & \textup{if } \frac{1}{\left(\frac{1}{x} + \frac{1}{y}\right)} = k \\
-\infty & \text{otherwise} \\
\end{matrix}\right.$. Like the geometric mean, there is no reserve depletion. To the best of our knowledge, this function has not been implemented in practice yet. One of the contributions of this paper is to explore the properties of this trading function and similar relatives.

In this paper, we will propose a new family of constant function market makers using a power root. 
Special cases of these power root functions (in particular, those with non-negative power) have been previously explored by Clipper \cite{othman2021} and Yieldspace \cite{niemerg2020}. 

\subsection{Marginal Price and Impermanent Loss}

We review the marginal price and impermanent loss for the constant sum and product functions, all well known results. 

\begin{lemma}
The marginal price for the constant sum function is $M_{\textup{sum}} = 1$. The marginal price for constant product function is $M_{\textup{prod}} = \frac{y}{x}$.
\label{lem:price}
\end{lemma}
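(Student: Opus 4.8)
The plan is to treat the marginal price as the negative slope of the trading curve, i.e., the marginal rate at which one asset can be exchanged for the other while holding the invariant fixed. Concretely, fixing $\psi(x,y)=k$ defines $y$ implicitly as a function of $x$ along the feasible trading set, and the marginal price of $x$ denominated in $y$ is $M = -\frac{dy}{dx}$ evaluated on this level curve. The first step is therefore to make this definition explicit and to observe that for both the constant sum and constant product invariants the relevant constraint is the ``active'' branch of $\psi$ (the branch on which $\psi=0$ rather than $-\infty$), so that trading is confined to the level set $x+y=k$ and $xy=k$ respectively.

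Next I would compute $\frac{dy}{dx}$ by implicit differentiation of each active constraint. For the constant sum, differentiating $x+y=k$ with respect to $x$ gives $1+\frac{dy}{dx}=0$, hence $\frac{dy}{dx}=-1$ and $M_{\textup{sum}} = -\frac{dy}{dx}=1$. For the constant product, differentiating $xy=k$ gives $y + x\frac{dy}{dx}=0$, so $\frac{dy}{dx} = -\frac{y}{x}$ and $M_{\textup{prod}} = -\frac{dy}{dx} = \frac{y}{x}$. Equivalently, one can obtain the same result via the gradient formula $M = \frac{\partial_x \psi}{\partial_y \psi}$, which for $\psi=x+y$ yields $1/1=1$ and for $\psi=xy$ yields $y/x$; presenting both routes makes clear that the answer does not depend on the incidental parameterization of the curve.

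Since both computations are elementary, I do not expect a genuine obstacle here; the only thing requiring care is conceptual rather than technical. Specifically, I would be explicit about the sign and denomination convention, namely that $M$ is the price of $x$ in units of $y$ so that the positive quantity $-\frac{dy}{dx}$ (and not $\frac{dy}{dx}$ itself) is the price, and about the fact that $M_{\textup{sum}}=1$ holds at every admissible point whereas $M_{\textup{prod}}=\frac{y}{x}$ varies with the reserves. Flagging this distinction is worthwhile because it foreshadows the later discussion of slippage: a constant marginal price corresponds to zero price impact, while a reserve-dependent marginal price is precisely what produces slippage under the constant product rule.
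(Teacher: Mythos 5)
Your proposal is correct and follows essentially the same route as the paper: the paper parameterizes a small trade as $(x-\Delta x,\, y+\Delta y)$ on the level set and computes $\lim_{\Delta x\to 0}\Delta y/\Delta x$ explicitly, which is exactly the quantity $-\frac{dy}{dx}$ you obtain by implicit differentiation. The only differences are presentational (finite-difference limit versus implicit differentiation and the gradient formula), and your remarks on sign conventions and on $M_{\textup{sum}}$ being reserve-independent match the paper's own follow-up commentary.
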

\begin{proof}
\small
Consider $(x - \Delta x) + (y + \Delta y) = k$. Solving for $\Delta y$, we obtain $\Delta y = k - x + \Delta x - y$. Then, $\lim_{\Delta x \rightarrow 0}\left(\frac{\Delta y}{\Delta x}\right) = \lim_{\Delta x \rightarrow 0}\left(\frac{k - x + \Delta x - y}{\Delta x}\right) = 1$. Next, consider, $(x - \Delta x)(y + \Delta y) = k$. Again, solving for $\Delta y$, we get $\Delta y = \frac{y\Delta x}{k - \Delta x}$. So $\lim_{\Delta x \rightarrow 0}\frac{\Delta y}{\Delta x} = \lim_{\Delta x \rightarrow 0}\frac{y}{x - \Delta x} = \frac{y}{x}$.
\end{proof}

In other words, the price ratio for the constant sum function is fixed. The price is independent of the token reserves. This is not true for the constant product function.

\begin{lemma}
Let $M' = \alpha M$ be a change in marginal price, $\alpha > 0$. Ignoring trading fees, the impermanent loss function for constant product function is $I_{\textup{prod}} = \frac{2\sqrt{\alpha}}{\alpha + 1} - 1$. For the constant sum function, $I_{\textup{sum}} = 0$ since the price marginal ratio of $x$ and $y$ must be fixed.
% WILL: need to explain I_sum = 0 because effective impermanent loss under these conditions are very high
\label{lem:og:imp}
\end{lemma}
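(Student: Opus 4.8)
The plan is to treat impermanent loss as the relative gap between the value of a liquidity provider's pooled position and the value of simply holding the initial reserves (``HODL''-ing), once the marginal price has moved by a factor $\alpha$. I would begin with reserves $(x_0, y_0)$ at which, by Lemma \ref{lem:price}, the pool quotes a marginal price $M = y_0/x_0$ for $x$ in units of $y$. The key modeling assumption (the arbitrage-closure assumption that makes ``impermanent loss'' well defined) is that after the external price shifts to $M' = \alpha M$, arbitrageurs trade against the pool until its marginal price again equals the market price, producing new reserves $(x_1, y_1)$.

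Then I would pin down $(x_1, y_1)$ from two conditions: the invariant is preserved, $x_1 y_1 = x_0 y_0$, and the new marginal price matches, $y_1/x_1 = \alpha\,(y_0/x_0)$. This is a $2\times 2$ system whose solution is immediate; substituting the price condition into the invariant gives $x_1 = x_0/\sqrt{\alpha}$ and hence $y_1 = \sqrt{\alpha}\,y_0$. The square roots here are exactly what will eventually surface in the final formula.

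Next I would value both portfolios in a common numeraire (units of $y$) at the post-move price $M'$. Holding gives $V_{\text{HODL}} = x_0 M' + y_0 = y_0(\alpha + 1)$, while the pooled position gives $V_{\text{LP}} = x_1 M' + y_1 = 2\sqrt{\alpha}\,y_0$. Forming $I_{\text{prod}} = V_{\text{LP}}/V_{\text{HODL}} - 1$ and cancelling $y_0$ yields $\frac{2\sqrt{\alpha}}{\alpha+1} - 1$, as claimed. For the constant sum case I would argue directly: since Lemma \ref{lem:price} gives $M_{\text{sum}} = 1$ regardless of the reserves, the pool always exchanges $x$ and $y$ one-for-one, so any rebalancing leaves the total portfolio value unchanged relative to holding; equivalently the relevant price ratio never drifts, forcing $\alpha = 1$ and $I_{\text{sum}} = 0$.

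The algebra is routine; the main thing to get right is conceptual rather than computational. The delicate step is justifying that the post-move reserves are determined by simultaneously enforcing the invariant and the price-matching condition, i.e., correctly encoding the arbitrage assumption and being consistent about the direction of the price quote $M = y/x$, since a sign or reciprocal slip there would corrupt the whole derivation. Care with the numeraire, valuing both legs at the same post-trade price $M'$, is the other place where the argument can quietly go wrong.
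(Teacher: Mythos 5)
Your proposal is correct and follows essentially the same route as the paper: pin down the post-arbitrage reserves by enforcing the invariant together with the marginal-price condition, value both the pooled and held portfolios at the new price $M'$, and take the relative difference. The only cosmetic difference is that you parameterize the reserves by the initial pair $(x_0, y_0)$ while the paper writes them in terms of $k$ and $M$ (as $x = \sqrt{k/M}$, $y = \sqrt{kM}$); the algebra is identical in substance.
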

\begin{proof}
\small
Let $x', y'$ be the new reserves corresponding to $M'$. Define $U(x, y, M) = Mx + y$, the portfolio value at price $M$. Then $I = \left(\frac{U(x', y', M') - U(x, y, M')}{U(x, y, M')}\right)$, the normalized amount the liquidity provider would have been better off by keeping tokens out of the pool. 

Starting with constant sum, we have $U(x', y', M') = M' x' + y' = x' + y' = k$, and $U(x, y, M') = M' x + y = x + y = k$. So $I_{\text{sum}} = \frac{k-k}{k} = 0$.

The constant product is slightly more intricate. Since $M = \frac{y}{x}$, we have $y = Mx$ and so $xy = x^2 M = k$. Thus, $x = \sqrt{\frac{k}{M}}$ and $y = \sqrt{kM}$. Now, $U(x', y', M') = M' \sqrt{\frac{k}{M'}} + \sqrt{kM'} = 2\sqrt{\alpha kM}$. Similarly, $U(x, y, M') = (\alpha M)\sqrt{\frac{k}{M}} + \sqrt{kM} = (\alpha+1)\sqrt{kM}$. So $I_{\text{prod}} = \frac{2\sqrt{\alpha}\sqrt{kM} - (\alpha+1)\sqrt{kM}}{(\alpha+1)\sqrt{kM}} = \frac{2\sqrt{\alpha}}{\alpha + 1} - 1$.
\end{proof}

\section{Value Functions for Liquidity Providers}

Given a trading function, a reasonable question to ask is what is the LP payoff function? Similarly, given a LP payoff function, what is the corresponding trading function?  
To answer these questions, prior work \cite{angeris2021replicating} formulates the trading and value functions as dual convex optimization problems. We summarize the findings below:

Let $a, b \in \mathbf{R}_+$ denote the true (external market) price of $x, y$, respectively. Define a value function $V: \mathbf{R}_+ \times \mathbf{R}_+ \rightarrow \mathbf{R}$ as a function that maps true prices to a payoff amount \cite{angeris2020improved}. Then, prior work \cite{angeris2020improved} has shown that 
\begin{equation}
    V(a, b) = \inf_{x, y \in \mathbf{R}_+} \{ax + by | \psi(x, y) = k \}
    \label{eq:forward}
\end{equation}
Equation~\ref{eq:forward} shows that traders and LPs are at odds in a zero-sum game. That is, traders achieve maximum profit when LPs achieve minimum value, preserving the constant function. (The converse is also true.) This intuition leads to the foundational result that 
\begin{equation}
    \psi(x, y) = \inf_{a, b \in \mathbf{R}} \{ax + by - V(a, b) \}.
    \label{eq:backward}
\end{equation}
In other words, the trading and value functions are Fenchel conjugates of one another. This duality does not hold for all value functions $V$. We require $V$ to be \textit{consistent}, which is composed of four requirements: concavity, non-decreasing, non-negative, and 1-homogeneous.

We list the value function for popular trading functions, and defer to \cite{angeris2021replicating} for derivations. It is straightforward to verify all value functions listed below are consistent.

\textbf{Constant reserve } $V_{\text{rsv}}(a, b) = a + b$, the sum value function.

\textbf{Constant sum } $V_{\text{sum}}(a, b) = \min \{a, b\}$, the minimum value function.

\textbf{Constant product } $V_{\text{prod}}(a, b) = \sqrt{ab}$, the product value function.

While enlightening, this analysis so far leaves us wanting \textit{more}. What is the relationship between these three value functions? Are there other value functions of the same family? Why does the constant product trading function map to a product value function but constant sum maps to a minimum function? Next, we take a step towards an answer.

\section{Constant Power Root Market Makers}

We structure the section as follows: first, we introduce a family of value functions that contain the constant reserve, sum, and product functions as special cases; second, we derive the corresponding trading function; third, we compute several properties of interest, such as marginal price and impermanent loss. Proofs are provided inline.

\begin{figure}[h!]
\centering
\includegraphics[width=\textwidth]{./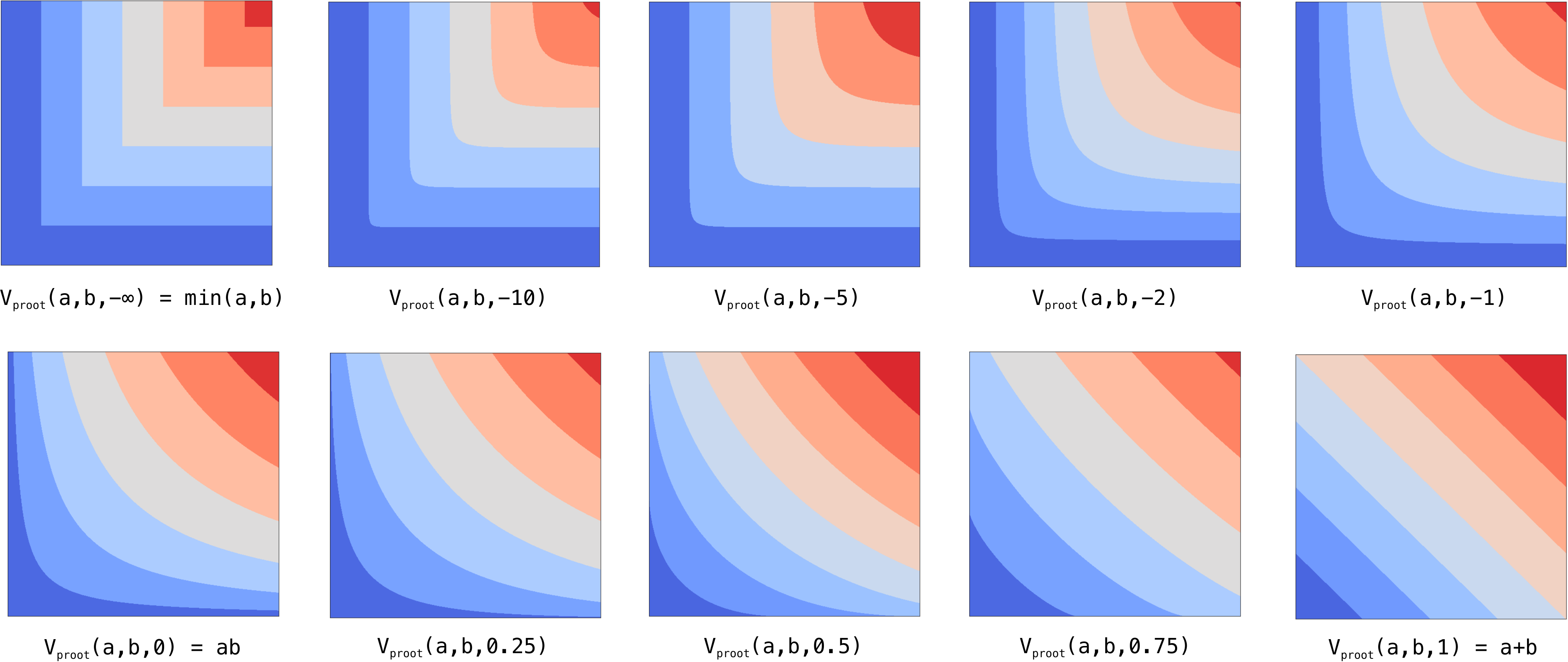}
\caption{Examples of power root value functions that interpolate between $p=0$ (constant product) and $p=1$ (constant reserve) and $p=-\infty$ (constant sum). }
\label{fig:invariant}
\end{figure}

\subsection{Power Root Value Function}

Define the power root value function as 
\begin{equation}
    V_{\text{pow}}(a, b, p) = (a^p + b^p)^{\frac{1}{p}}
    \label{eq:power_root}
\end{equation}
for a fixed constant $p\leq 1$. Note that $p$ does not have to be an integer. Equation~\ref{eq:power_root} is equivalent to a $p$-norm when $a, b, p \geq 0$. However, the two are not identical in properties, particularly in terms of concavity, as we will show in Section~\ref{sec:proot_trade}.

We claim that all the value functions corresponding to the constant reserve, sum, and product invariants are special cases of the power root value function. Consider:

\begin{lemma}
Suppose $a, b > 0$.
If $p = 1$, then $V_{\text{pow}}(a, b, 1) = V_{\text{rsv}}$. As $p \rightarrow 0$, we have $V_{\text{pow}}(a, b, 0) \rightarrow V_{\text{prod}}$. As $p \rightarrow -\infty$, we have $V_{\text{pow}}(a, b, -\infty) \rightarrow V_{\text{sum}}$.
\label{eq:special}
\end{lemma}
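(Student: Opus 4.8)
The plan is to treat the three regimes separately, since each special case is a different kind of limit. The only genuine equality is $p=1$: direct substitution into Equation~\ref{eq:power_root} gives $V_{\text{pow}}(a,b,1) = (a^1+b^1)^{1/1} = a+b = V_{\text{rsv}}$, so nothing further is needed there. For the two limiting cases the obstacle is the outer exponent $1/p$, which I would neutralize by passing to logarithms: writing $\ln V_{\text{pow}}(a,b,p) = \frac{1}{p}\ln(a^p+b^p)$ turns both limits into questions about a single quotient, after which I can exponentiate back at the end.

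For $p\rightarrow-\infty$ I would extract the dominant reserve. Assume without loss of generality $a\le b$; because a negative exponent reverses the order, $a^p\ge b^p$, so I factor $V_{\text{pow}}(a,b,p) = a\left(1+(b/a)^p\right)^{1/p}$. Since $b/a\ge 1$ and $p\rightarrow-\infty$ we have $(b/a)^p\rightarrow 0$, while $1/p\rightarrow 0$, so the bracketed factor tends to $1$ and $V_{\text{pow}}\rightarrow a = \min\{a,b\} = V_{\text{sum}}$. Equivalently, one checks $\ln V_{\text{pow}} = \ln a + \frac{1}{p}\ln\left(1+(b/a)^p\right)\rightarrow\ln a$. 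This is the easy limit.

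The $p\rightarrow 0$ case is the main obstacle, and it is where care is genuinely required. Naively $\ln V_{\text{pow}} = \frac{\ln(a^p+b^p)}{p}$ has numerator tending to $\ln 2\ne 0$ as $p\rightarrow 0$, so it is not a $0/0$ indeterminate form and L'Hôpital does not apply directly; the geometric mean emerges only once the constant is normalized out. Since in the duality framework each value function is defined only up to the level-set constant $k$, I would absorb the harmless factor $2^{1/p}$ and work with the power-mean form $\left(\frac{a^p+b^p}{2}\right)^{1/p}$, whose logarithm $\frac{1}{p}\ln\left(\frac{a^p+b^p}{2}\right)$ is now a bona fide $0/0$ limit. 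Applying L'Hôpital in $p$ (or expanding $a^p = 1+p\ln a + O(p^2)$ and likewise for $b^p$) gives $\left.\frac{a^p\ln a + b^p\ln b}{a^p+b^p}\right|_{p=0} = \frac{\ln a + \ln b}{2} = \ln\sqrt{ab}$, so $V_{\text{pow}}\rightarrow\sqrt{ab} = V_{\text{prod}}$. I expect resolving this indeterminate form, and recognizing that the geometric-mean limit requires the power-mean normalization (unlike the $p=1$ and $p\rightarrow-\infty$ cases, which hold for the unnormalized expression since $2^{1/p}\rightarrow 1$ there), to be the crux of the argument.
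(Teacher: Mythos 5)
Your proposal is correct, and it follows the same overall architecture as the paper's proof (direct substitution at $p=1$; a dominant-term/ordering argument for $p\rightarrow-\infty$; exp-log plus L'H\^opital for $p\rightarrow 0$). The one substantive difference is in the $p\rightarrow 0$ case, and there your version is actually the more careful one. The paper applies L'H\^opital directly to $\lim_{p\rightarrow 0}\frac{\log(a^p+b^p)}{p}$, but as you observe the numerator tends to $\log 2\neq 0$, so this is not a $0/0$ form and the rule does not apply; taken literally, $(a^p+b^p)^{1/p}$ diverges to $+\infty$ as $p\rightarrow 0^+$ and to $0$ as $p\rightarrow 0^-$ because of the stray factor $2^{1/p}$. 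The geometric mean $\sqrt{ab}$ is the limit of the normalized power mean $\left(\frac{a^p+b^p}{2}\right)^{1/p}$, which is exactly the object you pass to before differentiating, and your justification for why this normalization is harmless ($1$-homogeneity means rescaling $V$ by a positive constant only rescales the level set $k$, and $2^{1/p}\rightarrow 1$ in the other two regimes so they are unaffected) is the right one. For $p\rightarrow-\infty$ your factorization $a\left(1+(b/a)^p\right)^{1/p}$ with $a\leq b$ is equivalent to the paper's squeeze between $2^{1/p}\min\{a,b\}$ and $\min\{a,b\}$; both are fine. In short, your proof fills a genuine gap in the paper's own argument rather than introducing one.
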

\begin{proof}
\small
Begin with $p = 1$. We have $V_{\text{pow}}(a, b, 1) = a + b = V_{\textup{rsv}}$. 

Next, consider $p \rightarrow -\infty$. Note that $(a^p + b^p)^{\frac{1}{p}} \geq \left(2 \min \{a, b \}^p\right)^{\frac{1}{p}} = 2^{\frac{1}{p}}\min\{a, b\}$ by definition of minimum. But also $(a^p + b^p)^{\frac{1}{p}} \leq \left(\min\{a, b\}^p\right)^{\frac{1}{p}} = \min \{a, b\}$ since $p < 0$. By squeeze theorem, we have $(a^p + b^p)^{\frac{1}{p}} = \min\{a, b\}$.

Finally, consider $p \rightarrow 0$. If $a, b \geq 0$, then: 
\begin{align*}
    \lim_{p \rightarrow 0} (a^p + b^p)^{\frac{1}{p}} &= \exp \log \lim_{p \rightarrow 0} (a^p + b^p)^{\frac{1}{p}} = \exp \lim_{p \rightarrow 0} \log (a^p + b^p)^{\frac{1}{p}}  \\
    &=  \exp \lim_{p \rightarrow 0} \frac{\log (a^p + b^p)}{p} = \exp \lim_{p \rightarrow 0} \frac{\left(\frac{a^p \log a + b^p \log b}{a^p + b^p}\right)}{1} \\
    &= \exp \left(\frac{\log a + \log b}{2}\right) = \sqrt{ab}
\end{align*}
\end{proof}

\textbf{Economic Interpretation.} We can view value functions as \textit{production functions} in economic theory. For instance, $V_{\text{prod}}$ is equivalent to the Cobb-Douglas production function, $V_{\text{rsv}}$ is equivalent to the perfect substitute (linear) production function, and $V_{\text{sum}}$ is equivalent to the Leontief production function. Further, $V_{\text{pow}}$ is equivalent to the constant elasticity of substitution, of which the perfect substitute, Leontief, and Cobb-Douglas are special cases of. This matches the result shown in Lemma~\ref{eq:special}. 

\subsection{Power Root Trading Function}
\label{sec:proot_trade}

Before we can apply Equation~\ref{eq:backward} to derive a trading function, we must verify that $V_{\text{pow}}$ satisfies the four conditions for consistency \cite{angeris2021replicating}.

\begin{lemma}
$f(a, b) = (a^p + b^p)^{\frac{1}{p}}$ for $p \leq 1$, $a, b \in \mathbf{R}$ is concave.
\label{lem:concave}
\end{lemma}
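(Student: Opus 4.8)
The plan is to establish concavity on the positive orthant $a, b > 0$ (where the function is real-valued for non-integer $p$, and where prices actually live) by showing the Hessian of $f$ is negative semidefinite. First I would dispose of the boundary value $p = 1$, where $f(a,b) = a + b$ is linear and hence trivially concave, and recall that the limit $p \to 0$ gives the geometric mean $\sqrt{ab}$, whose concavity can be checked directly or inherited as a pointwise limit of concave functions (Lemma~\ref{eq:special}). This leaves the generic range $p < 1$, $p \neq 0$ to handle with calculus.

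For the generic range, I would write $S = a^p + b^p$ so that $f = S^{1/p}$, and compute the first partials $f_a = a^{p-1} S^{1/p - 1}$ and, by symmetry, $f_b = b^{p-1} S^{1/p-1}$. Differentiating once more and using the identity $S - a^p = b^p$ to simplify, I expect the Hessian entries $f_{aa} = (p-1)\, a^{p-2} b^p\, S^{1/p-2}$, $f_{bb} = (p-1)\, b^{p-2} a^p\, S^{1/p-2}$, and the cross term $f_{ab} = (1-p)\, a^{p-1} b^{p-1}\, S^{1/p-2}$.

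The two key observations are then immediate. Since $p < 1$ we have $p - 1 < 0$, so both diagonal entries $f_{aa}$ and $f_{bb}$ are negative on the positive orthant, making the trace negative; and a one-line cancellation shows the determinant $f_{aa} f_{bb} - f_{ab}^2$ vanishes identically, because both $f_{aa} f_{bb}$ and $f_{ab}^2$ equal $(1-p)^2 a^{2p-2} b^{2p-2} S^{2/p-4}$. A symmetric $2\times 2$ matrix with nonpositive trace and nonnegative determinant is negative semidefinite, which gives concavity of $f$.

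The step I expect to be the main (or at least the most alarming) obstacle is precisely the vanishing of the determinant: at first glance a zero determinant can look like a degenerate or failed argument. I would stress that this is not an accident but is forced by the $1$-homogeneity of $f$ -- Euler's identity makes the radial direction $(a,b)$ a null vector of the Hessian, so one eigenvalue is necessarily $0$ and negative semidefiniteness reduces to the remaining eigenvalue being $\leq 0$, which the negative trace guarantees. As a cleaner alternative that avoids computing all three second partials, I could exploit homogeneity to reduce to one variable, writing $f(a,b) = a\,\phi(b/a)$ with $\phi(t) = (1 + t^p)^{1/p}$; for a $1$-homogeneous function concavity of $f$ is then equivalent to concavity of the single-variable profile $\phi$, and a short computation yields $\phi''(t) = (p-1)\, t^{p-2}(1+t^p)^{1/p-2} < 0$ for $p < 1$, completing the argument.
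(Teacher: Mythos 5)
Your proof is correct, but it takes a genuinely different route from the paper. The paper argues indirectly: it shows $g(a,b)=a^p+b^p$ (or $-g$ when $p<0$) is concave via its diagonal Hessian, observes that $f=g^{1/p}$ is a monotone transform of it (hence quasiconcave), verifies $1$-homogeneity, and then invokes the standard fact that a nonnegative, quasiconcave, $1$-homogeneous function is concave. You instead compute the Hessian of $f$ itself and check negative semidefiniteness directly from the sign of the trace and the identically vanishing determinant; your computations $f_{aa}=(p-1)a^{p-2}b^pS^{1/p-2}$, $f_{ab}=(1-p)a^{p-1}b^{p-1}S^{1/p-2}$ are right, as is the perspective-function alternative $f(a,b)=a\,\phi(b/a)$. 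What each approach buys: the paper's argument avoids second derivatives of $f$ and extends verbatim to $n$ tokens, but it leans on the quasiconcave-plus-homogeneous implication, which it asserts without proof; yours is fully self-contained and elementary, and your observation that the zero determinant is forced by Euler's identity is exactly the structural reason the paper's indirect route works — the Hessian is degenerate along the ray $(a,b)$, so concavity reduces to the one-dimensional profile, which is what your $\phi''(t)=(p-1)t^{p-2}(1+t^p)^{1/p-2}<0$ computation captures. One minor point in your favor rather than against: you are explicit that the argument lives on the open positive orthant (and you handle $p=1$ and $p\to 0$ separately), whereas the lemma's statement ``$a,b\in\mathbf{R}$'' is not literally meaningful for non-integer $p$; the paper silently makes the same restriction.
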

\begin{proof}
\small
We first show that $f$ is quasiconcave, then 1-homogenous, which implies concavity.

A function is quasiconcave if it is a monotonically increasing function of a concave function. 
Let $g(a, b) = a^p + b^p$. If $p > 1$, this is  not concave. Then, if $0 \leq p \leq 1$, set $f(a, b) = g(a, b)^{\frac{1}{p}}$. Clearly $f$ is a monotonic function of $g$. It remains to show that $g$ is concave. To see this, note that $\nabla_{p,a} g = pa^{p-1}$, and $\nabla^2_{p, a} g = p(p-1)a^{p-2} \leq 0$ since $p - 1 \leq 0$. Similarly, $\nabla^2_{p, b} g \leq 0$. Thus, the Hessian of $g$ is a diagonal matrix with non-positive entries, and hence semidefinite. We conclude $g$ is concave. Next, suppose $p < 0$. Then $f = g^{\frac{1}{p}}$ is a monotonically decreasing function of $g$. But then $f$ is a monotonically increasing function of $-g$. The Hessian of $-g$ has entries $-p(p-1)a^{p-2}$ which are again non-positive, making $-g$ concave.

$f$ is also 1-homogeneous. To see this, note $f(ka, kb) = \left((ka)^p + (kb)^p\right)^{\frac{1}{p}} = \left(k^p \left(a^p + b^p\right)\right)^{\frac{1}{p}} = k\left(a^p + b^p\right)^{\frac{1}{p}} = kf(a, b)$. Given quasiconcavity and homogeneity, we have concavity.
\end{proof}

Lemma~\ref{lem:concave} shows the power root is a concave function for $p \leq 1$.  

As a corollary, we obtain our first theorem.

\begin{theorem}
The value function $V_{\text{pow}}(a, b) = (a^p + b^p)^{\frac{1}{p}}$ is consistent.
\label{thm:consistent}
\end{theorem}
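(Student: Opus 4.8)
The plan is to verify the four consistency requirements from the paper---concavity, non-decreasing, non-negative, and 1-homogeneous---for $V_{\text{pow}}(a,b) = (a^p + b^p)^{1/p}$ with $p \leq 1$ and $a, b > 0$. Two of these have already been established: Lemma~\ref{lem:concave} proves concavity, and its proof also establishes the 1-homogeneity property via the direct computation $f(ka, kb) = k f(a, b)$. So the theorem follows as a corollary once I dispatch the remaining two conditions, which are elementary.

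First I would establish non-negativity. Since $a, b > 0$, both $a^p$ and $b^p$ are strictly positive for any real exponent $p$, so $a^p + b^p > 0$. Raising a positive quantity to the power $1/p$ again yields a positive number (for any $p \neq 0$), so $V_{\text{pow}}(a, b) > 0$, which gives non-negativity immediately. For the boundary/limiting cases $p \to 0$ and $p \to -\infty$, I can appeal to Lemma~\ref{eq:special}, which identifies these limits with $\sqrt{ab}$ and $\min\{a,b\}$ respectively---both manifestly non-negative for positive inputs.

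Next I would show the function is non-decreasing, i.e.\ that each partial derivative is non-negative. Computing directly, $\frac{\partial V_{\text{pow}}}{\partial a} = (a^p + b^p)^{\frac{1}{p} - 1} \, a^{p-1}$. With $a, b > 0$, every factor here is positive regardless of the sign of $p$: the base $a^p + b^p$ is positive and raised to a real power stays positive, and $a^{p-1} > 0$. Hence $\frac{\partial V_{\text{pow}}}{\partial a} \geq 0$, and by the symmetry of the expression in $a$ and $b$ the same holds for $\frac{\partial V_{\text{pow}}}{\partial b}$. Thus $V_{\text{pow}}$ is non-decreasing in each argument.

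I do not anticipate a serious obstacle, since the theorem is explicitly framed as a corollary and the substantive analytic content (concavity) is already carried by Lemma~\ref{lem:concave}. The only point requiring mild care is consistency of domain: the concavity lemma is stated for $a, b \in \mathbf{R}$, whereas non-negativity and the non-decreasing property genuinely require $a, b > 0$ (so that non-integer powers are well-defined and real). I would therefore state the theorem on the positive orthant, matching the hypothesis of Lemma~\ref{eq:special}, and simply collect the four verified properties to conclude that $V_{\text{pow}}$ is consistent, so that Equation~\ref{eq:backward} applies and a valid trading function exists.
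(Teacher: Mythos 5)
Your proposal is correct and follows essentially the same route as the paper: cite Lemma~\ref{lem:concave} for concavity and 1-homogeneity, then verify non-negativity and the non-decreasing property by elementary means. Your only deviation is checking monotonicity via the partial derivative $(a^p+b^p)^{\frac{1}{p}-1}a^{p-1} > 0$ rather than the paper's direct inequality $((a')^p+(b')^p)^{\frac{1}{p}} \geq (a^p+b^p)^{\frac{1}{p}}$; this is arguably the cleaner check for $p<0$, where the paper's one-line inequality silently relies on two sign reversals cancelling.
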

\begin{proof}
\small
Lemma~\ref{lem:concave} shows concavity and homogeneity. For $a,b \geq 0$, it is true that $V_{\text{pow}}(a, b) \geq 0$ for all $p \leq 1$ so $V_{\text{pow}}$ is non-negative. Finally, pick $a' \geq a$ and $b' \geq b$. Then $V_{\text{pow}}(a', b') = ((a')^p + (b')^p)^{\frac{1}{p}} \geq (a^p + b^p)^{\frac{1}{p}} = V_{\text{pow}}(a, b)$, showing that it is non-decreasing.
\end{proof}

Now that we have shown $V_{\text{pow}}$ is consistent we are free to compute $\psi_{\text{pow}} = \inf_{a, b \in \mathbf{R}} \{ax + by - V_{\text{pow}}(a, b) \}$ as in Equation~\ref{eq:backward}. We claim the following:

\begin{theorem}
$\psi_{\text{pow}}(x, y) = \left\{\begin{matrix}
0 & \text{if } \left(x^q + y^q\right)^{\frac{1}{q}} \leq 1 \\
-\infty & \text{otherwise} \\
\end{matrix}\right.$ where $q = \frac{p}{p-1}$.
\label{thm:poweroot}
\end{theorem}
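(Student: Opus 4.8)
The plan is to compute the Fenchel conjugate in Equation~\ref{eq:backward} directly, exploiting the $1$-homogeneity of $V_{\text{pow}}$ established in Lemma~\ref{lem:concave}. Write $h(a,b) = ax + by - V_{\text{pow}}(a,b)$. Because $V_{\text{pow}}$ is $1$-homogeneous, so is $h$, i.e. $h(ta,tb) = t\,h(a,b)$ for $t \ge 0$; this is the structural fact that forces $\psi_{\text{pow}}$ to be two-valued. Indeed, if $h(a_0,b_0) < 0$ for some admissible $(a_0,b_0)$, then sending $t \to \infty$ along that ray drives $h \to -\infty$, so the infimum is $-\infty$; if instead $h \ge 0$ everywhere, the infimum is $0$, attained as $t \to 0$. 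Hence $\psi_{\text{pow}}(x,y) = 0$ exactly when $ax + by \ge (a^p+b^p)^{\frac{1}{p}}$ for all admissible $a,b \ge 0$, and $-\infty$ otherwise.

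The remaining task is to convert the quantifier ``for all $(a,b)$'' into the closed-form threshold on $(x,y)$. By $1$-homogeneity it suffices to test the inequality on the level set $\{a^p + b^p = 1\}$, so I would compute $m(x,y) = \inf\{ax+by : a^p+b^p = 1,\ a,b\ge 0\}$ and compare it with $1$. I would evaluate $m$ by Lagrange multipliers: stationarity gives $x = \lambda p\,a^{p-1}$ and $y = \lambda p\,b^{p-1}$, hence $a \propto x^{\frac{1}{p-1}}$ and $b \propto y^{\frac{1}{p-1}}$. Substituting into the constraint and back into $ax+by$, the exponent bookkeeping collapses to the conjugate exponent $q = \frac{p}{p-1}$ (note $\frac{1}{p}+\frac{1}{q}=1$), and the optimal value simplifies to $m(x,y) = (x^q + y^q)^{\frac{1}{q}}$. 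The resulting comparison of $(x^q+y^q)^{\frac{1}{q}}$ with $1$ is precisely the (reverse) H\"older inequality between the exponents $p$ and $q$, which is the analytic heart of the statement; the edge case $p=1$ (so $q \to \pm\infty$) and the limiting regions recovering the constant product and constant sum invariants can be read off from Lemma~\ref{eq:special}.

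The main obstacle, and the place where care is essential, is the sign of $q$ together with the direction of the inequality. For $0 < p < 1$ one has $p-1<0$, so $q<0$, and raising to the power $\frac{1}{q}$ reverses inequalities; for $p<0$ the level set $\{a^p+b^p=1\}$ lives in $(1,\infty)^2$, and one must confirm that the Lagrange stationary point is the relevant extremum rather than a boundary point with $a \to 0$ or $b \to 0$. I would settle the extremum type cleanly by convexity: the superlevel set $\{V_{\text{pow}} \ge 1\}$ is convex by Lemma~\ref{lem:concave}, so the linear functional $ax+by$ with $x,y \ge 0$ attains its infimum on the boundary at the stationary point, pinning $m$ down as a genuine minimum. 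Because these sign conventions are exactly what determine the precise orientation of the final inequality, I would first fix the normalization on the worked harmonic case $p=\tfrac12$, $q=-1$ (where $m(x,y)=\frac{xy}{x+y}$) and then propagate it to general $p \le 1$.
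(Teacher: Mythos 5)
Your plan follows the same overall route as the paper --- compute the Fenchel conjugate in Equation~\ref{eq:backward} --- but it actually supplies the step the paper skips. The paper's proof sets up $\psi(\mathbf{x}) = -f^*(-\mathbf{x})$ and then simply \emph{asserts} the closed form of $f^*$ as an indicator of a power-mean ball; your homogeneity argument (the objective $h(ta,tb)=t\,h(a,b)$ forces the infimum to be $0$ or $-\infty$), the reduction to the level set $\{a^p+b^p=1\}$, and the Lagrange/reverse-H\"older computation of $m(x,y)=(x^q+y^q)^{\frac{1}{q}}$ with $q=\frac{p}{p-1}$ are exactly the missing derivation, and they are correct. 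Your handling of the extremum type via convexity of the superlevel set of $V_{\text{pow}}$ is also the right way to rule out the boundary cases for $p<0$.

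One substantive point you leave unresolved, and which you should push through: carried to completion, your argument gives $\psi_{\text{pow}}(x,y)=0$ exactly when $m(x,y)\geq 1$, i.e.\ when $\left(x^q+y^q\right)^{\frac{1}{q}} \geq 1$, which is the \emph{reverse} of the inequality in the theorem statement. Your direction is the correct one. A quick sanity check: since $V_{\text{pow}}$ is nondecreasing and $a,b\geq 0$, the map $(x,y)\mapsto \inf_{a,b}\{ax+by-V_{\text{pow}}(a,b)\}$ is nondecreasing in the reserves, so the feasible set $\{\psi_{\text{pow}}=0\}$ must be upward closed; and in the constant-sum limit ($p\to-\infty$, $q\to 1$) taking $a=b=t$ shows directly that $h\to-\infty$ whenever $x+y<1$, so the feasible set is $\{x+y\geq 1\}$, not $\{x+y\leq 1\}$. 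So do not ``fix the normalization'' to match the printed $\leq$; your derivation is the one that survives the check, and the trading curve is the lower boundary $\left(x^q+y^q\right)^{\frac{1}{q}}=1$ of that upward-closed region. With that orientation settled, your proposal is complete.
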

\begin{proof}
\small
We will show this in the more general case of $n$ tokens $\mathbf{x} = (x_1, \ldots x_n)$ with prices $\mathbf{c} = (c_1, \ldots, c_n)$. By definition, we have $\psi(\mathbf{x}) = \inf_{\mathbf{c}}\left\{x^T c - \left(\sum_i c_i^p\right)^{\frac{1}{p}}\right\}$. Note that this is a convex objective as $x^T c$ is affine and the negation of a concave function convex. Then:
\begin{equation*}
    \psi(\mathbf{x}) = \inf_{\mathbf{c}}\left\{\mathbf{x}^T \mathbf{c} - \left(\sum_i c_i^p\right)^{\frac{1}{p}}\right\} = -\sup_{\mathbf{c}}\left\{-\mathbf{x}^T \mathbf{c} + \left(\sum_i c_i^p\right)^{\frac{1}{p}} \right\} = -f^*(-\mathbf{x})
\end{equation*}
where $f^*$ is the convex conjugate of $f = - \left(\sum_i c_i^p\right)^{\frac{1}{p}}$. This can be written as the closed form $f^*(-\mathbf{x}) = \left\{\begin{matrix}
0 & \text{if } \left(\sum_i x_i^q \right)^{\frac{1}{q}} \leq 1 \\
\infty & \text{otherwise} \\
\end{matrix}\right.$ where $q = \frac{p}{p-1}$. Then $\psi_{\text{pow}}(\mathbf{x}) = -f^*(-\mathbf{x}) =  \left\{\begin{matrix}
0 & \text{if } \left(\sum_i x_i^q \right)^{\frac{1}{q}} \leq 1 \\
-\infty & \text{otherwise} \\
\end{matrix}\right.$. For two tokens, $\psi_{\text{pow}}(x, y) =   \left\{\begin{matrix}
0 & \text{if } \left(x^q + y^q \right)^{\frac{1}{q}} \leq 1 \\
-\infty & \text{otherwise} \\
\end{matrix}\right.$.
\end{proof}

Given the form of Theorem~\ref{thm:poweroot}, we highlight a few special cases. If $p \rightarrow 0$, then $q \rightarrow 0$. If $p \rightarrow -\infty$, then $q \rightarrow 1$. If $p \rightarrow 1$, then $q \rightarrow -\infty$. This is consistent with what we know about constant reserve, sum, and product functions. The value function for the product invariant is also a product, which is shown by $p, q$ both converging to 0 (see Lemma~\ref{eq:special}). The value function for constant sum is a minimum function, which is captured by $q$ converging to 1 when $p$ converges to $-\infty$ (again, recall Lemma~\ref{eq:special}).  

\begin{figure}[h!]
\centering
\includegraphics[width=0.7\linewidth]{./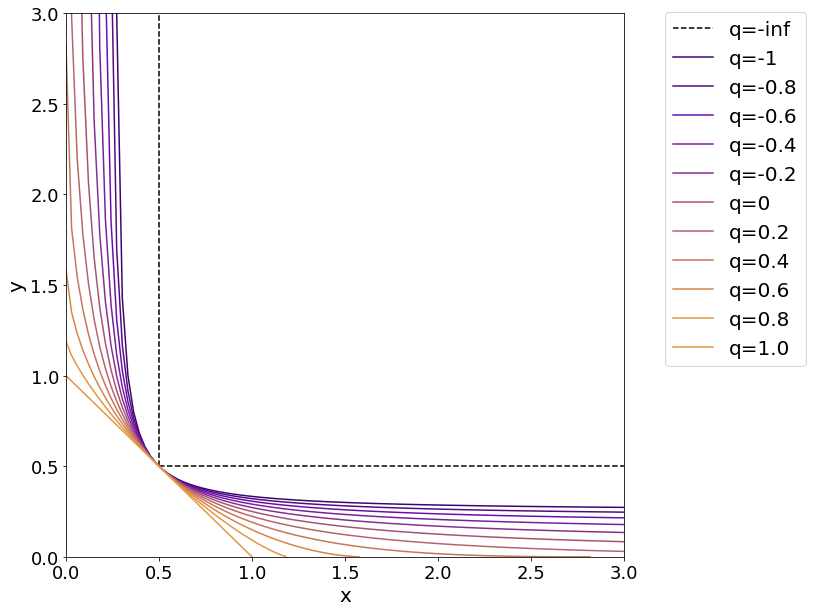}
\caption{Examples of power root that interpolate between $q=-1$ to $q=1$ (constant sum), including $q = 0$ (constant product). We plot the constant reserve in the dotted line.}
\label{fig:invariant}
\end{figure}

Figure~\ref{fig:invariant} visualizes the trading curve for a range of powers $q$. We include $q=0$ (constant product), $q=1$ (constant sum), and $q=-\infty$ (constant reserve, shown in dotted line) for comparisons. We observe that varying $q \in (-\infty, 1]$ predictably controls the curvature of the trading function between two extreme functions.

\textbf{Harmonic Mean Result }
In the case $q \rightarrow -1$, we recover the constant harmonic mean invariant.
By Theorem~\ref{thm:consistent}, we know $\psi_{\text{har}}(x, y)$ is concave.
Further, by Theorem~\ref{thm:poweroot}, we have that its value function is $V_{\text{har}}(a, b) = (a^{\frac{1}{2}}+b^{\frac{1}{2}})^2$ as $p = \frac{1}{2}$ when $q = -1$. This function has higher curvature than constant product.

\subsection{Marginal Price Function}
\label{sec:proot_price}

Given the constant power root trading function, we can derive an expression for the marginal price as a function of $p$. The ``marginal'' price represents the price of one token in terms of the other assuming an infinitesimally small trade. 

Begin with $\left((x - \Delta x)^q + (y + \Delta y)^q\right)^{\frac{1}{q}} = k$. Then $\Delta y = (k^q - (x - \Delta x)^q)^{\frac{1}{q}} - y$. Dividing by $\Delta x$, we obtain $\frac{\Delta y}{\Delta x} = \frac{(k^q - (x - \Delta x)^q)^{\frac{1}{q}} - y}{\Delta x}$. Derive the marginal price using L'Hopital's rule:
\begin{align*}
    \small
    M_{\text{pow}}(q) &= \lim_{\Delta x \rightarrow 0}\frac{\Delta y}{\Delta x} = \lim_{\Delta x \rightarrow 0} \left(\frac{\frac{d \Delta y}{d \Delta x}}{\frac{d \Delta x}{d \Delta x}} \right) = \lim_{\Delta x \rightarrow 0} \left( \frac{d}{d\Delta x} \left((k^q - (x - \Delta x)^q)^{\frac{1}{q}} - y\right)\right) \\
    &= \lim_{\Delta x \rightarrow 0} \left( \frac{1}{q}(k^q - (x - \Delta x)^q)^{\frac{1}{q} - 1} \cdot -q(x - \Delta x)^{q-1} \cdot -1 \right)\\
    &= (k^q - x^q)^{\frac{1}{q} - 1} \cdot x^{q-1} = (y^q)^{\frac{1}{q}-1} \cdot x^{q-1}
\end{align*}
Since $\frac{1}{q}-1 = \frac{1-q}{q}$, we obtain the following expression
\begin{equation}
    M_{\text{pow}}(q) = x^{q-1}y^{1-q}
    \label{eq:proot_price}
\end{equation}
A quick inspection of Equation~\ref{eq:proot_price} and Lemma~\ref{lem:price} shows that for $q = 1$, we have $M_{\text{pow}}(1) = x^{0}y^{0} = 1 = M_{\text{sum}}$, and $M_{\text{pow}}(0) = x^{-1}y^{1} = \frac{y}{x} = M_{\text{prod}}$. That is, the marginal price for the constant sum and product invariants are again, special cases of the constant power root.

Figure~\ref{fig:marginalprice} plots the marginal price as a function of the reserve ratios between $x$ and $y$. As $q \rightarrow 1$, we see the marginal price converge to a constant function at 1. As $q \rightarrow -\infty$, it approaches an exponentially increasing function. When $q = 0$, the price is linear.

\subsection{Impermanent Loss Function}
\label{sec:imploss}

Using the result in Section~\ref{sec:proot_price}, we show the following theorem:

\begin{theorem}
Let $\alpha > 0$ be a change in marginal price $M = M_{\text{pow}}(q)$. The impermanent loss for the constant power root trading function is
\begin{equation}
    I_{\textup{proot}}(q) = \left(\frac{1+M^{\frac{q}{1-q}}}{1 + (\alpha M)^{\frac{q}{1-q}}}\right)^{\frac{1}{q}}\left(\frac{\alpha M + (\alpha M)^{\frac{1}{1-q}}}{\alpha M +  M^{\frac{1}{1-q}}}\right) - 1.
\end{equation}
\label{eq:proot_imploss}
\end{theorem}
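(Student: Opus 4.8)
The plan is to reuse the recipe from Lemma~\ref{lem:og:imp}: express the equilibrium reserves as closed-form functions of the marginal price, then substitute into the impermanent loss definition
$$I = \frac{U(x', y', M') - U(x, y, M')}{U(x, y, M')}, \qquad U(x, y, M) = Mx + y,$$
with $M' = \alpha M$ and $x', y'$ the reserves associated with the shifted price $M'$.

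First I would invert the marginal price formula of Equation~\ref{eq:proot_price}. Writing $M = x^{q-1}y^{1-q} = (y/x)^{1-q}$ gives $y = x\,M^{\frac{1}{1-q}}$. Substituting this into the active invariant $x^q + y^q = k^q$ yields $x^q\big(1 + M^{\frac{q}{1-q}}\big) = k^q$, and therefore the closed forms
$$x = k\big(1 + M^{\frac{q}{1-q}}\big)^{-\frac{1}{q}}, \qquad y = k\,M^{\frac{1}{1-q}}\big(1 + M^{\frac{q}{1-q}}\big)^{-\frac{1}{q}}.$$
Running the identical derivation at the shifted price $M' = \alpha M$ gives $x'$ and $y'$ by replacing $M$ with $\alpha M$ throughout.

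Next I would assemble the two portfolio values, factoring out the common $k$ and the $(1+\cdot)^{-1/q}$ term in each:
$$U(x', y', M') = k\big(1 + (\alpha M)^{\frac{q}{1-q}}\big)^{-\frac{1}{q}}\Big(\alpha M + (\alpha M)^{\frac{1}{1-q}}\Big),$$
$$U(x, y, M') = k\big(1 + M^{\frac{q}{1-q}}\big)^{-\frac{1}{q}}\Big(\alpha M + M^{\frac{1}{1-q}}\Big).$$
Forming the ratio, the factor $k$ cancels and the two $(1+\cdot)^{-1/q}$ terms combine into $\big(\frac{1 + M^{q/(1-q)}}{1 + (\alpha M)^{q/(1-q)}}\big)^{1/q}$, leaving exactly the product of brackets stated in the theorem; subtracting $1$ completes the proof.

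I expect the only real difficulty to be bookkeeping with the fractional exponents $M^{\frac{q}{1-q}}$ and $M^{\frac{1}{1-q}}$ — in particular, solving the invariant for $x$ with the correct $-1/q$ power and recognizing the cancellation structure when forming the ratio. There is no deep analytic subtlety in the derivation itself, but I would sanity-check the formula against Lemma~\ref{lem:og:imp} by letting $q \to 0$; this limit is delicate because the first factor is an indeterminate $1^{\infty}$ form whose value is $\alpha^{-1/2}$, which then combines with the second factor $\frac{2\alpha}{\alpha+1}$ to correctly recover $I_{\text{prod}} = \frac{2\sqrt{\alpha}}{\alpha+1} - 1$.
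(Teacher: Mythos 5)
Your proposal is correct and follows essentially the same route as the paper's own proof: invert $M = x^{q-1}y^{1-q}$ to get $y = M^{\frac{1}{1-q}}x$, solve the invariant for $x$ and $y$ in terms of $k$ and $M$, substitute into $U(x,y,M)=Mx+y$ at price $M'=\alpha M$, and form the ratio. Your closing sanity check of the $q\to 0$ limit reproduces the paper's Corollary~\ref{coro:one}, so nothing is missing.
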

\begin{proof}
\small
First, we wish to represent $x$ and $y$ in terms of the constant $k$ and the marginal price $M$ only. Recall from Equation~\ref{eq:proot_price} that $M = x^{q-1}y^{1-q}$. So $y = M^{\frac{1}{1-q}}x$. Since $(x^q + y^q)^{\frac{1}{q}} = k$, we compute $x = \frac{k}{(1 + M^{\frac{q}{1-q}})^{\frac{1}{q}}}$, and thus $y = \frac{kM^{\frac{1}{1-q}}}{(1 + M^{\frac{q}{1-q}})^{\frac{1}{q}}}$.

Second, we can compute value $U(x, y, M) = M x + y$. Note $M' = \alpha M$. Let $x', y'$ be the token reserves at the new marginal price $M'$. Then $U(x', y', M') = M'\left(\frac{k}{(1 + (M')^{\frac{q}{1-q}})^{\frac{1}{q}}}\right) + \left(\frac{k(M')^{\frac{1}{1-q}}}{(1 + (M')^{\frac{q}{1-q}})^{\frac{1}{q}}}\right) = \frac{\alpha kM + k(\alpha M)^{\frac{1}{1-q}}}{(1 + (\alpha M)^{\frac{q}{1-q}})^{\frac{1}{q}}}$. Similarly, compute $U(x, y, M') = \frac{\alpha kM + k M^{\frac{1}{1-q}}}{(1 + M^{\frac{q}{1-q}})^{\frac{1}{q}}}$.

Now, use the definition that $I = \frac{U(x', y', M') - U(x, y, M')}{U(x, y, M')}$. To be pedantic, we obtain \[I = \left(\frac{\left(\frac{1 + M^{\frac{q}{1-q}}}{1 + (\alpha M)^{\frac{q}{1-q}}}\right)^{\frac{1}{q}}(\alpha M + (\alpha M)^{\frac{1}{1-q}}) - (\alpha M + M^{\frac{1}{1-q}})}{\alpha M + M^{\frac{1}{1-q}}}\right).\] Divide by the denominator to find the result.
\end{proof}

We aim to show special cases of this impermanent loss. As these require nontrivial work, we do so in the following corollaries.
\begin{corollary}
$\lim_{q \rightarrow 0} I_{\textup{proot}}(q) = I_{\textup{prod}}$. 
\label{coro:one}
\end{corollary}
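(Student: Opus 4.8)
The plan is to write $I_{\textup{proot}}(q) + 1$ as a product of two factors and analyze each separately as $q \to 0$. Define
$$A(q) = \left(\frac{1+M^{\frac{q}{1-q}}}{1 + (\alpha M)^{\frac{q}{1-q}}}\right)^{\frac{1}{q}}, \qquad B(q) = \frac{\alpha M + (\alpha M)^{\frac{1}{1-q}}}{\alpha M + M^{\frac{1}{1-q}}},$$
so that $I_{\textup{proot}}(q) = A(q)\,B(q) - 1$. The factor $B(q)$ is the easy one: since $\frac{1}{1-q} \to 1$ as $q \to 0$ and every quantity is continuous in that exponent, I can substitute directly to obtain $B(q) \to \frac{\alpha M + \alpha M}{\alpha M + M} = \frac{2\alpha}{\alpha+1}$, where the common factor of $M$ cancels.

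The factor $A(q)$ is the crux, since its base tends to $\frac{1+1}{1+1}=1$ while its exponent $\frac{1}{q}$ diverges, an indeterminate $1^{\infty}$ form. I would handle this by writing $A(q) = \exp\left(\frac{1}{q}\log\frac{1+M^{s}}{1+(\alpha M)^{s}}\right)$ with the shorthand $s = \frac{q}{1-q}$, and then Taylor-expanding $M^{s} = 1 + s\log M + O(s^{2})$ and $(\alpha M)^{s} = 1 + s\log(\alpha M) + O(s^{2})$ as $s \to 0$. Substituting these and using $\log(1+u) = u + O(u^{2})$ gives
$$\log\frac{1+M^{s}}{1+(\alpha M)^{s}} = \frac{s}{2}\bigl(\log M - \log(\alpha M)\bigr) + O(s^{2}) = -\frac{s}{2}\log\alpha + O(s^{2}).$$
Multiplying by $\frac{1}{q}$ and using $\frac{s}{q} = \frac{1}{1-q} \to 1$, the argument of the exponential tends to $-\frac{1}{2}\log\alpha$, so $A(q) \to \alpha^{-1/2}$. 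This limit is the same whether $q \to 0^{+}$ or $q \to 0^{-}$, since the surviving term is linear in $q$, so the two-sided limit exists.

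Finally I would combine the two limits: $\lim_{q\to 0}\bigl(I_{\textup{proot}}(q) + 1\bigr) = \alpha^{-1/2}\cdot\frac{2\alpha}{\alpha+1} = \frac{2\sqrt{\alpha}}{\alpha+1}$, whence $\lim_{q\to 0} I_{\textup{proot}}(q) = \frac{2\sqrt{\alpha}}{\alpha+1} - 1 = I_{\textup{prod}}$ by Lemma~\ref{lem:og:imp}. The main obstacle is the first-order asymptotic analysis of $A(q)$: one must track that the base approaches $1$ at rate $O(s)=O(q)$ while the exponent grows like $\frac{1}{q}$, so only the first-order Taylor coefficients survive and the $\log M$ terms cancel, leaving a clean power of $\alpha$. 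An alternative to the Taylor expansion would be a direct L'Hopital computation applied to $\frac{1}{q}\log(\text{base})$, but the expansion makes the cancellation of the $M$-dependence most transparent.
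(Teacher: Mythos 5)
Your proposal is correct and follows essentially the same route as the paper: the same two-factor decomposition of $I_{\textup{proot}}(q)+1$, the same exp-log treatment of the $1^{\infty}$ form yielding $\alpha^{-1/2}$, and direct substitution for the second factor yielding $\frac{2\alpha}{\alpha+1}$. Your Taylor expansion merely makes explicit the intermediate limit computation that the paper leaves to the reader.
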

\begin{proof}
We compute the limit separately for the three pieces of $I_{\textup{proot}}(q)$.

\begin{enumerate}
    \item $\lim_{q\rightarrow 0} \left( \frac{1 + M^{\frac{q}{1-q}}}{1 + (\alpha M)^{\frac{q}{1-q}}} \right)^{\frac{1}{q}} = \exp  \lim_{q \rightarrow 0}  \left(\frac{\log \left(\frac{1 + M^{\frac{q}{1-q}}}{1 + (\alpha M)^{\frac{q}{1-q}}}\right)}{q} \right)   = \left(\frac{M}{\alpha M}\right)^{\frac{1}{2}} = \alpha^{-\frac{1}{2}}$.
    \item $\lim_{q\rightarrow 0}\left(\frac{\alpha M + (\alpha M)^{\frac{1}{1-q}}}{\alpha M +  M^{\frac{1}{1-q}}}\right) =  \frac{2\alpha M}{(\alpha+1)M} = \frac{2\alpha}{\alpha + 1}$.
\end{enumerate}

Altogether, the limit evaluates to $\frac{1}{\sqrt{\alpha}} \cdot \frac{2\alpha}{\alpha + 1} - 1 = \frac{2\sqrt{\alpha}}{\alpha+1} - 1$. Recall Lemma~\ref{lem:og:imp} to see this is equivalent to $I_{\textup{prod}}$.
\end{proof}

\begin{corollary}
$\lim_{q \rightarrow 1} I_{\textup{proot}}(q) = I_{\textup{sum}}$. 
\label{coro:two}
\end{corollary}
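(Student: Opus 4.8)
The plan is to convert the $q\to 1^-$ limit into an asymptotic computation through the substitution $t=\frac{1}{1-q}$, under which $q\to 1^-$ becomes $t\to+\infty$ and the three exponents in Theorem~\ref{eq:proot_imploss} collapse to $\frac{q}{1-q}=t-1$, $\frac{1}{1-q}=t$, and $\frac{1}{q}=\frac{t}{t-1}$. Writing $P(t)=I_{\textup{proot}}(q)+1$, the target becomes
\[
P(t)=\left(\frac{1+M^{t-1}}{1+(\alpha M)^{t-1}}\right)^{\frac{t}{t-1}}\cdot\frac{\alpha M+(\alpha M)^{t}}{\alpha M+M^{t}},
\]
and it suffices to prove $P(t)\to 1$, since then $\lim_{q\to1}I_{\textup{proot}}(q)=0=I_{\textup{sum}}$ by Lemma~\ref{lem:og:imp}.

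Unlike Corollary~\ref{coro:one}, I would \emph{not} take the limit factor-by-factor: in the regime where the price exceeds unity the first factor tends to $0$ while the second tends to $\infty$, so splitting the product yields an indeterminate $0\cdot\infty$ form. This is the crux of the proof. To resolve it I would first consolidate the two factors, using $\alpha M+(\alpha M)^{t}=\alpha M\,(1+(\alpha M)^{t-1})$ and $\frac{t}{t-1}=1+\frac{1}{t-1}$ to cancel the common block $1+(\alpha M)^{t-1}$, obtaining the single fraction
\[
P(t)=\frac{\alpha M\,(1+M^{t-1})^{\frac{t}{t-1}}\,(1+(\alpha M)^{t-1})^{-\frac{1}{t-1}}}{\alpha M+M^{t}}.
\]
After this step the problem is one of competing powers rather than of multiplying divergent quantities.

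To finish I would factor the dominant power out of each binomial. In the relevant price regime $M,\alpha M>1$ one has $(1+M^{t-1})^{\frac{t}{t-1}}=M^{t}\bigl(1+M^{-(t-1)}\bigr)^{\frac{t}{t-1}}$ and $(1+(\alpha M)^{t-1})^{-\frac{1}{t-1}}=(\alpha M)^{-1}\bigl(1+(\alpha M)^{-(t-1)}\bigr)^{-\frac{1}{t-1}}$, where both binomial corrections tend to $1$ because their bases tend to $1$ and their exponents tend to $0$ or $1$. The numerator then reduces to $\alpha M\cdot M^{t}\cdot(\alpha M)^{-1}=M^{t}$ up to a factor $1+o(1)$, which is exactly the leading term of the denominator $\alpha M+M^{t}=M^{t}(1+o(1))$, so $P(t)\to1$. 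The subtlety I expect to watch most carefully is that the exponent $\frac{t}{t-1}\to1$ must not disturb the cancellation of the $M^{t}$ terms; it does not, since the sub-leading corrections vanish. The symmetric regime $M,\alpha M<1$ is even easier, as there each factor already converges to $1$ and no indeterminate form arises, so the same conclusion follows by direct evaluation.
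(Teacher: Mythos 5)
Your route is genuinely different from the paper's, and in the regimes you treat it is the more rigorous of the two. The paper keeps the product $J(q)=A\cdot B$ split, computes $\log \lim A$ and $\log \lim B$ separately, finds that each one-sided limit is $+\infty$ or $-\infty$ depending on $\alpha$, and then asserts that the two always sum to $0$ --- which is exactly the $\infty-\infty$ (equivalently $0\cdot\infty$) indeterminacy you identify as the crux and refuse to engage with factor-by-factor. Your consolidation via $\alpha M+(\alpha M)^{t}=\alpha M\,(1+(\alpha M)^{t-1})$ and the exponent identity $\frac{t}{t-1}=1+\frac{1}{t-1}$ cancels the divergent block before any limit is taken, and the subsequent extraction of the dominant power $M^{t}$ from numerator and denominator is a clean way to finish. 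The substitution $t=\frac{1}{1-q}$ is cosmetic but harmless; note also that since $q=\frac{p}{p-1}$ with $p\le 1$ forces $q<1$, only the one-sided limit $q\to 1^{-}$, i.e.\ $t\to+\infty$, is in play, so nothing is lost by ignoring $q\to 1^{+}$.

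The gap is the case split. You handle $M,\alpha M>1$ and $M,\alpha M<1$ but silently drop the mixed regime, and that regime is not vacuous: if $M>1>\alpha M$, then in your consolidated fraction the factor $(1+(\alpha M)^{t-1})^{-\frac{1}{t-1}}$ tends to $1$ rather than to $(\alpha M)^{-1}(1+o(1))$, so the numerator is asymptotic to $\alpha M\cdot M^{t}$ while the denominator is still asymptotic to $M^{t}$, giving $P(t)\to\alpha M\ne 1$. Concretely, $M=2$, $\alpha=\tfrac14$ yields $P(t)\to\tfrac12$, hence a limiting impermanent loss of $-\tfrac12$, not $0$; the other mixed case $M<1<\alpha M$ gives $P(t)\to\frac{1}{\alpha M}$. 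So the corollary as stated holds only when $M$ and $\alpha M$ lie on the same side of $1$. To be fair, the paper's proof has the identical blind spot --- its application of L'Hopital to $\frac{1+M^{q/(1-q)}}{1+(\alpha M)^{q/(1-q)}}$ presupposes an $\infty/\infty$ form, which likewise requires $M$ and $\alpha M$ both above $1$ --- so this is a limitation of the statement rather than something your argument introduces. Still, you should state the restriction explicitly rather than gesturing at ``the relevant price regime,'' and ideally record the nontrivial limiting value in the excluded cases.
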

\begin{proof}
\small
Consider only the first term in $I_{\textup{proot}}$ and apply exp-log to get separate limits. Let $J(q) = I_{\text{pow}}(q) + 1 = \left(\frac{1+M^{\frac{q}{1-q}}}{1 + (\alpha M)^{\frac{q}{1-q}}}\right)^{\frac{1}{q}}\left(\frac{\alpha M + (\alpha M)^{\frac{1}{1-q}}}{\alpha M  +  M^{\frac{1}{1-q}}}\right)$. Then, 
\[\lim_{q \rightarrow 1} J(q) = \exp \left( \log \lim_{q \rightarrow 1} A  + \log \lim_{q \rightarrow 1} B  \right).\] Focusing on the expression inside the $\log$, we will study each of these in turn. First,
\begin{align*}
    \log \lim_{q\rightarrow 1} A &= \lim_{q\rightarrow 1} \log \left( \frac{1 + M^{\frac{q}{1-q}}}{1 + (\alpha M)^{\frac{q}{1-q}}} \right)^{\frac{1}{q}} = \lim_{q \rightarrow 1} \frac{1}{q} \cdot \log \left( \lim_{q \rightarrow 1} \frac{1 + M^{\frac{q}{1-q}}}{1 + (\alpha M)^{\frac{q}{1-q}}}  \right) \\
    &= \log \left( \frac{\frac{M^{\frac{-q}{q-1}}\log M}{(q-1)^2}}{\frac{(\alpha M)^{\frac{-q}{q-1}}\log (\alpha M)}{(q-1)^2}} \right) = \log \left( \lim_{q \rightarrow 1} \alpha^{\frac{-q}{q-1}} \cdot \frac{\log M }{\log (\alpha M)} \right) \\
    &= \lim_{q \rightarrow 1} \left( \frac{q}{q-1} \right) \log \alpha + \log \left(\frac{\log M}{\log (\alpha M)}\right)
\end{align*}
where the swaps between the limit and log follows by the continuity of log. 

This limit is undefined but we can evaluate the left and right limits. If $q \rightarrow 1^+$, then $\left(\frac{q}{q-1}\right) \rightarrow \infty$, leaving $\log \lim_{q\rightarrow 1^+} A = \left\{\begin{matrix}
 -\infty & 0 < \alpha < 1 \\
 0 & \alpha = 1  \\
 \infty & \alpha > 1
\end{matrix}\right.$. If $q \rightarrow 1^-$, we have $\left(\frac{q}{q-1}\right) \rightarrow -\infty$, leaving $\log \lim_{q\rightarrow 1^-} A = \left\{\begin{matrix}
 \infty & 0 < \alpha < 1 \\
 0 & \alpha = 1  \\
 -\infty & \alpha > 1
\end{matrix}\right.$. 

Next, we can evaluate the second term:
\begin{align*}
    \log \lim_{q\rightarrow 1} B &= \log \lim_{q \rightarrow 1}\left(\frac{\alpha M + (\alpha M)^{\frac{1}{1-q}}}{\alpha M + M^{\frac{1}{1-q}}} \right) = \log \lim_{q \rightarrow 1} \left( \frac{\frac{(\alpha M)^{\frac{-1}{q-1}} \log (\alpha M) }{(q-1)^2}}{\frac{M^{\frac{-1}{q-1}} \log M}{(q-1)^2}}\right) \\ 
    &= \log \lim_{q\rightarrow 1}\left( \ \frac{(\alpha M)^{\frac{-1}{q-1}} \log (\alpha M)}{M^{\frac{-1}{q-1}} \log M}\right) = \log \lim_{q \rightarrow 1} \left( \alpha^{\frac{-1}{q-1}} \cdot \frac{\log (\alpha M)}{\log M} \right)
\end{align*}
where the second equality follows by L'Hopital's rule.

Again, this limit doesn't exist but we can evaluate the left and right limits. If $q \rightarrow 1^+$, then $\left(\frac{-1}{q-1}\right) \rightarrow -\infty$, leaving $\log \lim_{q\rightarrow 1^+} A = \left\{\begin{matrix}
 \infty & 0 < \alpha < 1 \\
 0 & \alpha = 1  \\
 -\infty & \alpha > 1
\end{matrix}\right.$. If $q \rightarrow 1^-$, we have $\left(\frac{-1}{q-1}\right) \rightarrow \infty$, leaving $\log \lim_{q\rightarrow 1^-} A = \left\{\begin{matrix}
 -\infty & 0 < \alpha < 1 \\
 0 & \alpha = 1  \\
 \infty & \alpha > 1 \\
\end{matrix}\right.$.

Now, we notice that $\log \lim A + \log \lim B = 0$ for all cases of $\alpha$ and $q$ coming from left or right. Note we cannot take the limit of $A + B$ as each is not defined.

Thus, we have $\lim_{q \rightarrow 1} J(q) = \exp 0 = 1$. So $\lim_{q \rightarrow 1} I_{\textup{proot}}(q) = 0$, which is $I_{\textup{sum}}$.
\end{proof}
Corollaries~\ref{coro:one} and \ref{coro:two} show that like the marginal price, the constant sum and product impermanent loss functions are special cases of the constant power root. 
\begin{figure}[h!]
\centering
\includegraphics[width=\textwidth]{./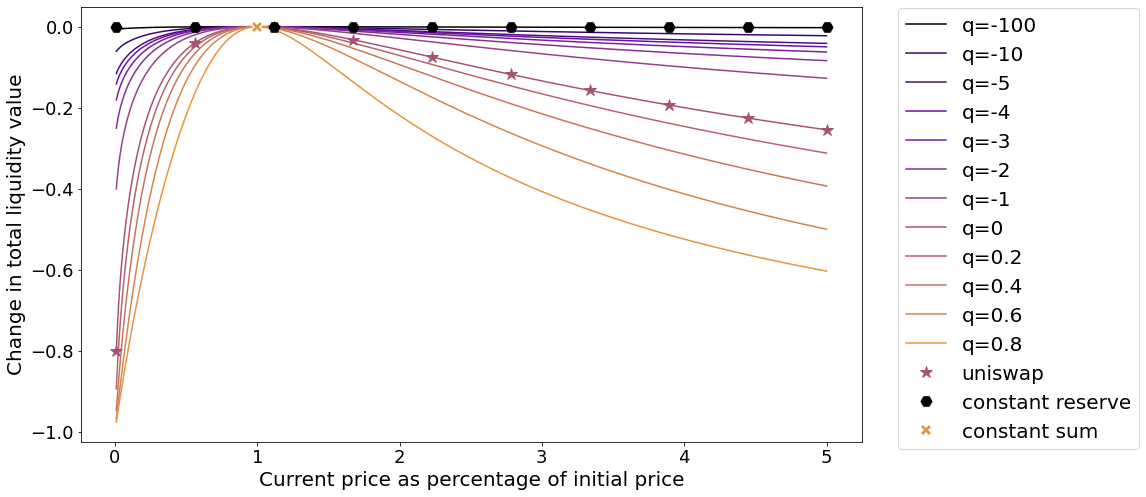}
\caption{Impermanent loss as a function of $-\infty < q \leq 1$ with comparisons to constant sum, constant reserve, and constant product market makers (i.e. Uniswap).}
\label{fig:imploss}
\end{figure}

We plot the impermanent loss as a function of $q$ in Figure~\ref{fig:imploss}, where the constant product (i.e. Uniswap) is shown in purple stars, constant reserve is shown in black hexagons, and the constant sum is shown with an orange cross (note this is only a single point as marginal price does not change for the constant sum invariant). Figure~\ref{fig:imploss} shows a gradient of loss functions as $q$ varies. In particular, we see increased impermanent loss as $q$ approaches 1 and reduced impermanent loss as $q$ approaches $-\infty$. This is aligned with prior work \cite{angeris2020does} that suggests trading functions with flatter curvature have higher loss.

\subsection{Price Impact}

Next, we compute the price impact function \cite{angeris2020does}, or the change to marginal price given a trade of size $\Delta x$, as $\frac{d \Delta y}{d \Delta x}$, the derivative of the price function. Equation~\ref{eq:priceimpact} shows the price impact function for power roots.
\begin{equation}
    \frac{d \Delta y}{d \Delta x} = (x^q + y^q - (x - \Delta x)^q)^{\frac{1-q}{q}} \cdot (x - \Delta x)^{q-1}
    \label{eq:priceimpact}
\end{equation}
Refer to \cite{angeris2020does} for derivations for the constant sum and product functions. It is straightforward to show that those are special cases of Equation~\ref{eq:priceimpact} by $q = 0$ and $q = 1$.
Figure~\ref{fig:priceimpact} shows the price impact for constant power root market makers of varying $q$. When $q \rightarrow 1$, the price impact function approaches 0 slope, which makes sense given that the marginal price is constant when $q = 1$. When $q \rightarrow -\infty$, the price impact function has a faster increasing slope, representing higher levels of price sensitivity.

Practically, we note that price impact function is closely related to slippage as a constant function market maker with high price impact will likely experience higher slippage rates. As a base example, when $q = 1$, there is  zero slippage due to the constant marginal price, or equivalently, its price impact function having zero slope.

\begin{figure}[h!]
\centering
\begin{subfigure}[b]{.45\textwidth}
    \centering
    \includegraphics[width=\linewidth]{./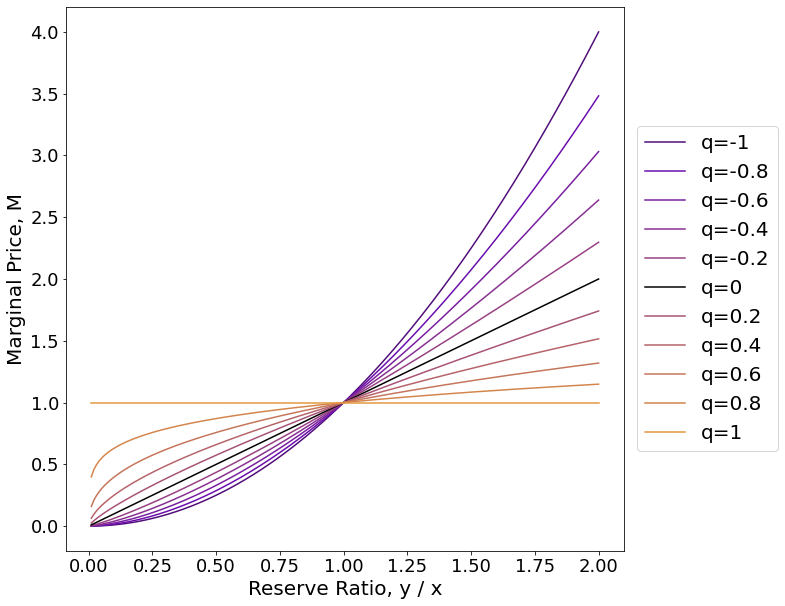}
    \caption{Marginal price as a function of $q$. On the x-axis, we plot a ratio of $\frac{y}{x}$ with $x = 1$.}
    \label{fig:marginalprice}
\end{subfigure}
\hfill
\begin{subfigure}[b]{.45\textwidth}
    \centering
    \includegraphics[width=\linewidth]{./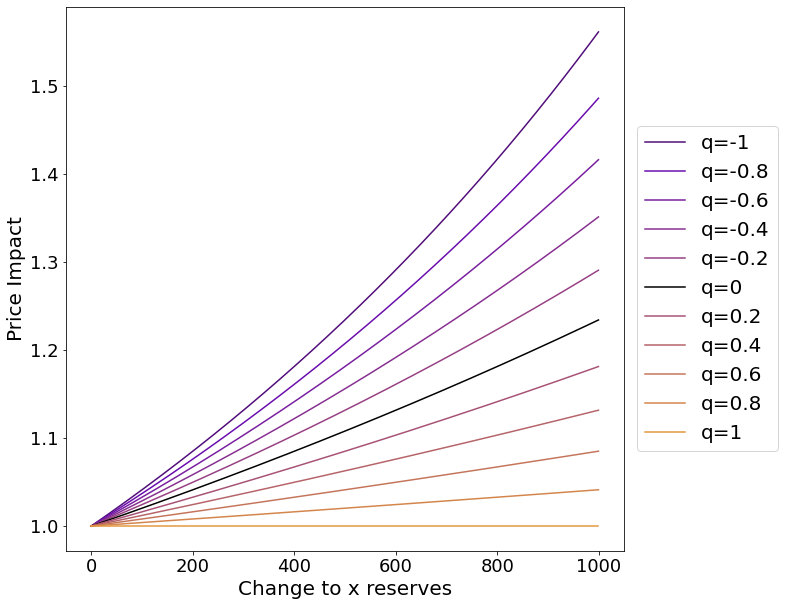}
    \caption{Impact of a trade on price as a function of the trade size ($\Delta x \in [0, 9000]$, $x, y = 10000$).}
    \label{fig:priceimpact}
\end{subfigure}
\caption{Marginal price is shown on the left, defined as the price in the limit of an infinitesimally small trade. Price impact is shown on the right, the sensitivity of price in response to trade size. In both figures, we see a continuum between $q=-1$ and $q=1$.}
\end{figure}

In summary, there is a trade-off between impermanent loss and price impact, which is perhaps not surprising. For large negative powers, the invariant has low loss for LPs but high price slippage for traders. For powers close to one, the opposite is true. 

\subsection{Value Derivatives}

Since impermanent loss can be difficult to interpret, we also compute the value $U(x,y,M)$ as well as derivatives with respect to price, such as delta $\Delta = \frac{\delta U}{\delta M}$ and gamma $\Gamma = \frac{\delta^2 U}{\delta M^2}$. Continuing from Section~\ref{sec:imploss}, we can simplify the expression for value using only the liquidity amount $k$ and the marginal price $M$, removing reserves from the equation:
\begin{equation}
    U(M,k) = Mk\left(1 + M^{\frac{q}{1-q}}\right)^{\frac{q-1}{q}}
\end{equation}
Taking first and second derivatives, we arrive at closed form expressions for the greeks:
\begin{equation}
    \Delta(M,k) = k\left(M^{\frac{q}{q-1}} + 1\right)^{-\frac{1}{q}},\qquad \Gamma(M, k) = \left(\frac{-1}{1-q}\right)M^{\frac{q}{1-q}-1}\left(M^{\frac{q}{1-q}} + 1\right)^{\frac{-1}{q}-1}
\end{equation}
In the following figure, we visualize the greeks for a variety of powers $q$.

\begin{figure}[h!]
\centering
\begin{subfigure}[b]{.49\textwidth}
    \centering
    \includegraphics[width=\linewidth]{./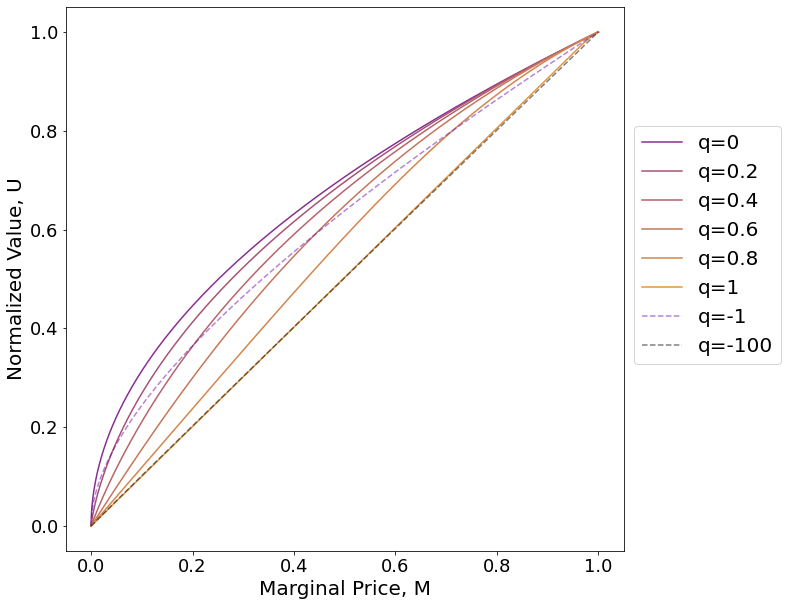}
    \caption{Value $U$}
    \label{fig:value}
\end{subfigure}
\hfill
\begin{subfigure}[b]{.49\textwidth}
    \centering
    \includegraphics[width=\linewidth]{./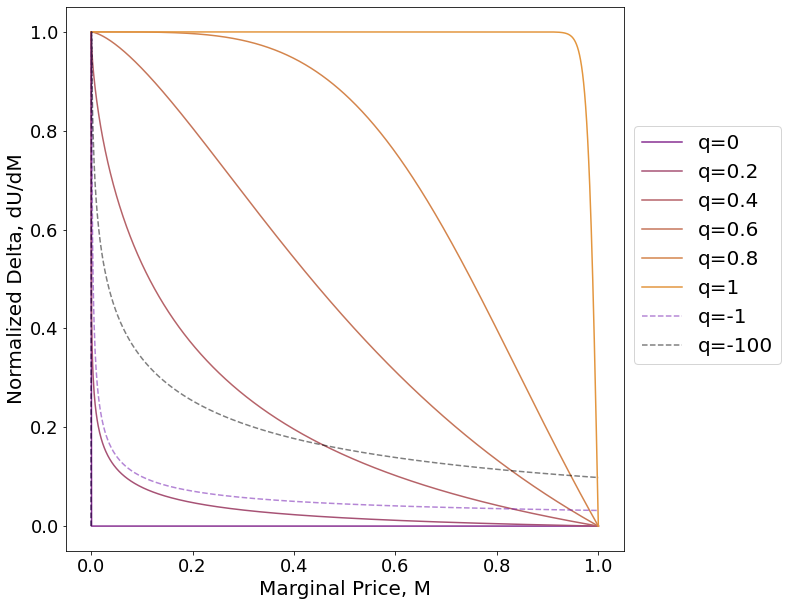}
    \caption{Delta $\Delta$}
    \label{fig:delta}
\end{subfigure}
\begin{subfigure}[b]{.5\textwidth}
    \centering
    \includegraphics[width=\linewidth]{./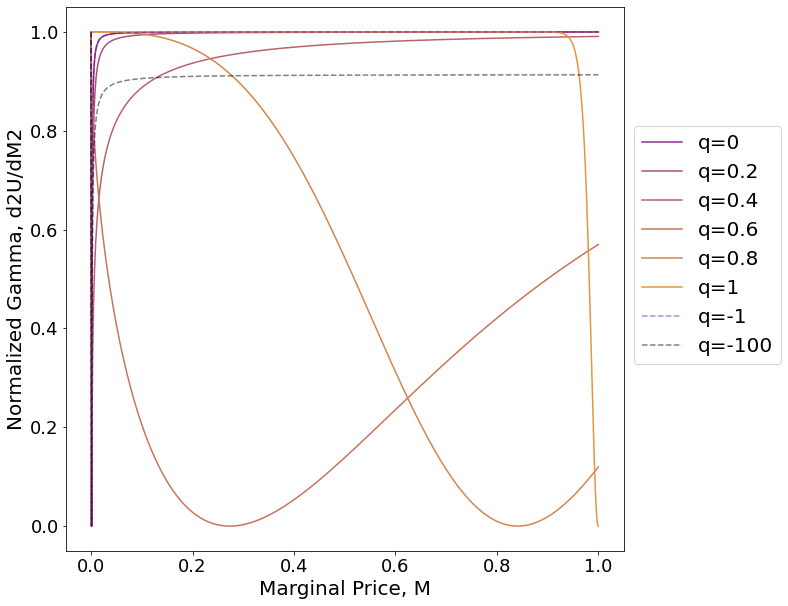}
    \caption{Gamma $\Gamma$}
    \label{fig:gamma}
\end{subfigure}
\caption{Visualizing value, delta, and gamma for a range of powers. In particular, we compare $q = 0$ (constant product, Uniswap) to $q = 1$ (constant sum) to $q = -1$ (constant harmonic mean) to $q = -100$ (a proxy for $q = -\infty$, also called constant reserve or HODL).}
\label{fig:greeks}
\end{figure}

In Figure~\ref{fig:greeks}, we vary the marginal price $M$ between 0 and 1 and study the effect on value, delta, and gamma. For each subplot, we normalize the y axis to be between 0 and 1 for visual comparison. As such, the reader should focus on the shape rather than the magnitudes. We make a few observations. First, in Figure~\ref{fig:greeks}(a), for positive powers, the larger the power, the less curvature in the value function. Interestingly, for larger negative powers, we observe a similar behavior, as $q = -100$ and $q = 1$ have similar shape. Recalling payoff functions from Figure~\ref{fig:invariant}, we note that Figure~\ref{fig:greeks}(a) shows the same plot in lower dimensions.

Second, we observe more diverse delta functions as $q$ spans between 0 and 1 in Figure~\ref{fig:greeks}(b). For smaller positive powers, we observe delta to increase exponentially as price goes to zero, whereas for larger positive powers, we observe delta to increase exponentially as price grows large. For powers close to $q=0.5$, we observe near linear change in delta. Third, in Figure~\ref{fig:greeks}(c), we observe that subtle differences in value are exaggerated as gamma functions across powers differ in behavior. This large range in gamma functions may suggest power roots to play a role in structured products to achieve a desired gamma. We note that delta and gamma for negatives powers  closely mimic those of constant product.

\subsection{Reserve Depletion}

A benefit of the Uniswap trading function ($q=0$) is a guarantee of no reserve depletion. For any value of token $x > 0$ and any constant $k$, we have a $y = \frac{k}{x} > 0$. In practice, this implies no trade can conduct a trade large enough to deplete resources. This property, however, does not hold for the sum invariant ($q=1$), where the analogous expression $y = k - x$ suggests that $x \leq k$, otherwise $y \leq 0$.
% , a non-sensible result. This is largely the reason why constant sum market makers are not used in practice as it is trivial for an arbitrageur to trade $x > k$. 

The power root invariant with $q \leq 0$ guarantees no reserve depletion. However for powers $0 < q \leq 1$ it, like the sum, has the same point of reserve depletion, namely $x = k$. One can see this from the equation $y = (k^q - x^q)^{\frac{1}{q}}$, which implies $k^q \geq x^q$ is required. As $q \rightarrow 0$, this constraint is removed, and power root collapses to product. However, for $q$ closer to 0, we observe more incentives to not deplete reserves due to price increases, though it is theoretically possible. For example, for $q = \epsilon$ close to 0, the marginal price is $\frac{y^{1 - \epsilon}}{x^{1-\epsilon}} \approx \frac{y}{x}$. If $x$ is near depleted (and $y$ near $k$), the price will be tending towards $\infty$.  

\begin{figure}[h!]
\centering
\includegraphics[width=0.8\textwidth]{./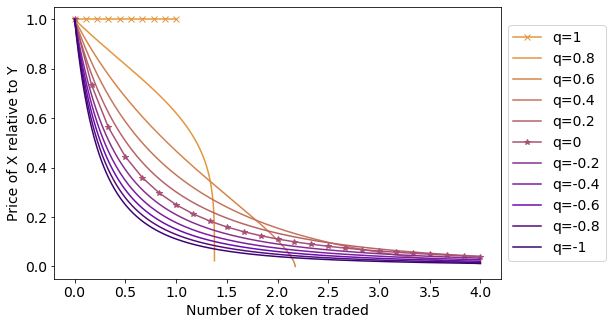}
\caption{The price of token $x$ relative to token $y$ as a function of the trade size.}
\label{fig:relative}
\end{figure}

Figure~\ref{fig:relative} visualizes the relative price between two tokens as a function of trade size in token $x$. Constant sum and product are shown with special markers. We make a few observations. First, the relative price decreases as trade size grows, which is expected given the concavity of the trading functions. Second, for $0 < q < 1$, the plotted lines intersect the x-axis, representing points of depleted reserves (for $q = 1$, we have the point of depletion at $x = 1$). For $q \leq 0$, the plotted lines never intersect the x-axis. Third, for $0 < q < 1$, there are segments where the price is better than constant product, which suggests better conditions for traders assuming small trades. This is not the case for large trades. Clipper \cite{othman2021} formalized this approach by only allowing small trades. For $q < 0$, prices are bounded above by constant product ($q=0$), meaning traders face a higher cost.

\section{Summary}
We proposed a new constant function market maker using the power root equation $\psi_{\text{pow}}(\mathbf{x}) = \left(\sum_i x_i^q\right)^{\frac{1}{q}}$ with the corresponding payoff (value) function for liquidity providers as $V_{\text{pow}}(\mathbf{c}) = \left(\sum_i c_i^p\right)^{\frac{1}{p}}$ with $q = \frac{p}{p-1}$. Inspired by economic production functions, we showed this choice to be a generalization of the constant reserve, sum, and product. The constant power root function interpolates between the constant reserve, sum, and product functions in terms of marginal price, price impact, and impermanent loss. In our analysis, we observe a trade-off between low slippage and impermanent loss, and susceptibility to reserve depletion. 
% Finally, we proposed the dynamic power root market maker, which aims to vary the power based on the reserve ratio and liquidity. Given the simulation results, we are optimistic this new trading function may result in more capital efficient markets.

\section*{Acknowledgement}

A special thanks must go to Guillaume Lambert, Alok Vasudev, Theo Diamandis, Dan Robinson, and Joey Santoro for their helpful feedback and discussion.

\bibliographystyle{unsrt}
\bibliography{main}

\end{document}